\newtheorem*{theorem*}{Theorem}
\Crefname{theorem}{Theorem}{Theorems}
\crefname{theorem}{theorem}{theorems}
\newtheorem*{proposition*}{Proposition}
\Crefname{proposition}{Proposition}{Propositions}
\crefname{proposition}{proposition}{propositions}
\newtheorem*{corollary*}{Corollary}
\Crefname{corollary}{Corollary}{Corollaries}
\crefname{corollary}{corollary}{corollaries}
\newtheorem*{claim*}{Claim}
\Crefname{claim}{Claim}{Claims}
\crefname{claim}{claim}{claims}
\newtheorem{lemma}{Lemma}
\newtheorem*{lemma*}{Lemma}
\Crefname{lemma}{Lemma}{Lemmas}
\crefname{lemma}{lemma}{lemmas}
\newtheorem*{conjecture*}{Conjecture}
\Crefname{conjecture}{Conjecture}{Conjectures}
\crefname{conjecture}{conjecture}{conjectures}
\theoremstyle{definition}
\newtheorem*{definition*}{Definition}
\Crefname{definition}{Definition}{Definitions}
\crefname{definition}{definition}{definitions}
\newtheorem{assumption}{Assumption}
\newtheorem*{assumption*}{Assumption}
\Crefname{assumption}{Assumption}{Assumptions}
\crefname{assumption}{assumption}{assumptions}
\theoremstyle{definition}
\newtheorem{remark}{Remark}
\title{Valid Wald Inference with Many Weak Instruments}
\date{\today}
\author{Luther Yap}
\begin{document}

\onehalfspacing

\maketitle

\begin{abstract}
This paper proposes three novel test procedures that yield valid inference in an environment with many weak instrumental variables (MWIV). It is observed that the t statistic of the jackknife instrumental variable estimator (JIVE) has an asymptotic distribution that is identical to the two-stage-least squares (TSLS) t statistic in the just-identified environment. Consequently, test procedures that were valid for TSLS t are also valid for the JIVE t. Two such procedures, i.e., VtF and conditional Wald, are adapted directly. By exploiting a feature of MWIV environments, a third, more powerful, one-sided VtF-based test procedure can be obtained. 
\end{abstract}


\section{Introduction}

Consider an instrumental variable (IV) model where unit $i$ has outcome $Y_i$, scalar endogenous variable $X_i$ and a vector of instruments $Z_i$. With $\Pi_{i}=E\left[X_{i}\mid Z_{i}\right]$,
\begin{equation} \label{eqn:structural}
\begin{split}
Y_{i} & =\beta X_{i}+e_{i}\\
X_{i} & =\Pi_{i}+v_{i}
\end{split}
\end{equation}
where the object of interest is $\beta$. This paper is interested in the environment where $K:=dim(Z_i)$ is large and $cov(X_i, Z_i) \ne 0$ though the instrument is plausibly weak. Let $Z$ denote the $N\times K$ data matrix of instruments so $P:=Z\left(Z^{\prime}Z\right)^{-1}Z^{\prime}$ denotes the projection matrix, with $P_{ij}$ denoting the $(i,j)$th element of $P$. 
The concentration parameter 
$$S := \sum_{i}\sum_{j\ne i}P_{ij}\Pi_{i}\Pi_{j}/\sqrt{Var \left(\sum_{i}\sum_{j\ne i}P_{ij}X_{i}X_{j} \right) } $$ 
is a normalized object that characterizes how strong the instruments are. A setting with many weak instrument variables (MWIV) is one that is robust when $S \rightarrow \infty$ may not hold. 

This setting with MWIV is widely applicable. In this environment first proposed by \citet{bekker1994alternative}, the number of instrumental variables (IV) increases as the sample size increases. Such an environment can arise by design of the instrument. For instance, the judge instrument design uses indicators for the randomly assigned judges as instruments, so as the sample size increases the number of judges (and hence instruments) also increases. This design has applications in studying the effects of incarceration and detention (e.g., \citet{kling2006incarceration}; \citet{aizer2015juvenile}; \citet{dobbie2018effects}; \citet{bhuller2020incarceration}), the effect of bankruptcy (e.g, \citet{dobbie2015debt}), the effect of disability benefits (e.g., \citet{autor2019disability}); and the effect of foster care due to randomly assigned social workers (e.g., \citet{doyle2007child}). The shift-share design also generates many instruments by construction (e.g., \citet{adao2019shift}; \citet{goldsmith2020bartik}). The MWIV environment may also arise from constructed instruments. Authors may construct 150 instruments by interacting the quarter of birth with the state of birth (e.g., \citet{angrist1991does}). 

There is a large literature arguing that the standard two-stage-least-squares (TSLS) procedure for IV is biased and yields invalid inference in the many instruments environment (e.g., \citet{angrist1999jackknife}; \citet{donald2001choosing}; \citet{hansen2008estimation}). Consequently, a literature advocating for jackknife procedures has developed (e.g., \citet{angrist1999jackknife}; \citet{chao2012asymptotic}). 
In particular, the jackknife instrumental variable estimator (JIVE) proposed by \citet{angrist1999jackknife} (jiv2 in their paper) has gained much traction in empirical studies. This estimator, often referred to as the ``jackknife" or ``leave-out" estimator, has been used by papers including \citet{kling2006incarceration}, \citet{dobbie2018effects}, \citet{autor2019disability} and \citet{bhuller2020incarceration}. 
The JIVE estimator first constructs an instrument $\hat{X}_i := Z_i^\prime \hat{\pi}_{-i}$, where $\hat{\pi}_{-i}$ is the coefficient on $Z$ when regressing $X$ on $Z$ using all units except the $i$th unit. Then, using $\hat{X}$ to denote the data matrix of $\hat{X}_i$'s and $Y$ to denote the outcomes, $\hat{\beta}_{JIVE} = (\hat{X}^\prime X)^{-1} \hat{X}^\prime Y$, which essentially treats $\hat{X}$ as the instrument in a regular TSLS procedure. 

The popularity of JIVE can partially be attributed to its interpretability. Once $\hat{X}$ is viewed as a debiased constructed instrument, JIVE can be viewed as an extension of the familiar TSLS approach to the many instruments environment. 
Further, there are theoretical reasons to use JIVE. With a heteroskedastic model, \citet{chao2005consistent} showed how other estimators that are robust to many instruments, such as the limited information maximum likelihood (LIML) and the bias-corrected TSLS (BTSLS), require $S/ \sqrt{K} \rightarrow \infty$ for consistency. In contrast, \citet{chao2012asymptotic} showed that the JIVE remains consistent $S\rightarrow \infty$, so JIVE is more robust to weak instruments than BTSLS and LIML. Further, \citet{evdokimov2018inference} showed how, even with heterogeneous treatment effects, the JIVE estimand is a weighted average of treatment effects. 

While estimation using JIVE is justified, inference is a more difficult issue. The JIVE t-ratio $t^2 = (\hat{\beta}_{JIVE}-\beta_0)/ \widehat{var}(\hat{\beta}_{JIVE})$ is commonly used for testing the null hypothesis that $H_0: \beta = \beta_0$. However, there are two issues that arise with $t^2$. 
First, $\widehat{var}(\hat{\beta}_{JIVE})$ cannot be calculated in the same way as the variance estimator in TSLS. Since $\hat{X}$ is a constructed object, we cannot simply calculate the variance as if $\hat{X}$ were the instrument. Hence, a different variance estimator is required. Namely, building on \citet{chao2012asymptotic}, \citet{mikusheva2022inference} derived a variance estimator $\hat{V}$ that is consistent even under the alternative. 
The second issue concerns the critical value used for conducting the test. If $S \rightarrow \infty$, then comparing $t^2$ with $\chi^2$ (and hence $t$ with the standard normal) is justified. However, when $S$ does not diverge, $t^2$ has an asymptotic distribution that is non-standard. 
Consequently, a literature has developed on inference that is robust to both having many instruments and when the instruments are weak. These inference procedures include \citet{mikusheva2022inference} and \citet{crudu2021inference} that are based on the method of \citet{anderson1949estimation} adapted to MWIV, \citet{ayyar2022conditional} based on the conditional likelihood ratio, and \citet{matsushita2022jackknife} that is based on the LM statistic. 
These papers propose use test statistics that are different from $t^2$ so that their test statistics are asymptotically normal under the null. Hence, having an appropriate test procedure based on the JIVE t is still an open issue --- a gap that this paper aims to fill.  

Together with the development of MWIV procedures, there is also recent advancement in just-identified IV environments. The VtF procedure proposed in \citet{lmmpy23}, henceforth LMMPY, is found to be more powerful than many existing procedures, and can outperform many of them when considering lengths of confidence intervals (CI). Then, there is an open question on whether their novel method of curve construction applies to MWIV, and if so, whether these power properties remain. 

This paper first observes that the JIVE t in the MWIV environment has the same asymptotic distribution as the TSLS t statistic in the just-identified instrumental variable (IV) environment. This observation builds on \citet{mikusheva2022inference} who had an expression for the JIVE t when finding an analog of a first-stage screening procedure of \citet{StockYogo05}. 
The observation implies that any inference procedure valid for the TSLS t is also valid for the JIVE t. 
By defining the analogous terms appropriately, the conditional Wald procedure of \citet{moreira2003conditional} and the VtF procedure of LMMPY can both be implemented for the JIVE t. In particular, the same critical values can be used. 

Beyond the two adaptations of existing procedures into the new environment, this paper develops a third procedure that exploits a unique feature of the MWIV environment. In MWIV environments,the sign of $E[X_iX_j]$ for observations $i$ and $j$ with similar instrumental values is often known. For instance, in the judge environment where treatment (e.g., incarceration) is endogenous and the instrument (i.e., judges) are randomly assigned, observations $i$ and $j$ assigned to the same judge have $E[X_iX_j]\geq0$. By exploiting this information, this paper develops a third procedure that builds on VtF that is even more powerful. Notably, even if the assumption that $E[X_iX_j]\geq0$ does not hold, the proposed test is still valid --- it merely has less power.

The VtF-based approaches proposed in this paper retains the interpretability of the $t^2$ statistic based on JIVE. To implement the procedure after calculating $\hat{\beta}_{JIVE}$, practitioners merely have to use the appropriate variance estimator $\hat{V}$ to construct $t^2$ and the appropriate critical values, either from LMMPY or the one-sided values in this paper.

\section{Setting} \label{sec:setting}

Let $Y_{i},X_{i},Z_{i}$ denote the outcome, the endogenous variable, and the vector of instruments respectively. There are $N$ observations and $K$ instruments. The many $K$ instruments could arise from having $K$ different judges, for instance. Consider the linear IV model in \Cref{eqn:structural}. For some random variables $A$ and $B$, this paper uses the notation $Q_{AB} := (1/\sqrt{K}) \sum_{i=1}^{N}\sum_{j\ne i}P_{ij} A_i B_j$. In particular,

\[
\left(Q_{ee},Q_{Xe},Q_{XX}\right)^{\prime}:=\frac{1}{\sqrt{K}}\sum_{i=1}^{N}\sum_{j\ne i}P_{ij}\left(e_{i}e_{j},X_{i}e_{j},X_{i}X_{j}\right)^{\prime}
\]

Due to the setting, $e_i:=Y_i-X_i\beta$ is defined as the residual with respect to the true $\beta$. This paper makes a high-level assumption on the asymptotic distribution. Let $\mu^{2}:=\sum_{i}\sum_{j\ne i}P_{ij}\Pi_{i}\Pi_{j}$.
\begin{assumption} \label{asmp:distr}
\[
\left[\begin{array}{c}
Q_{ee}\\
Q_{Xe}\\
Q_{XX}-\frac{\mu^{2}}{\sqrt{K}}
\end{array}\right]\xrightarrow{d}N\left(0,\left[\begin{array}{ccc}
\Phi & \Sigma_{12} & \Sigma_{13}\\
\Sigma_{12} & \Psi & \tau\\
\Sigma_{13} & \tau & \Upsilon
\end{array}\right]\right)
\]
\end{assumption}

The asymptotic distribution of Assumption \ref{asmp:distr} is immediate from the structural model of \Cref{eqn:structural} and the central limit theorem of \citet{chao2012asymptotic} once some regularity conditions are satisfied. This paper makes a high level assumption on the distribution to abstract from the discussion of these regularity conditions. 

When doing a hypothesis test of $H_0:\beta=\beta_0$, the hypothesized $\beta_0$ is used instead. Let $\Delta:=\beta-\beta_{0}$ denote the divergence of the hypothesized value from the true value. Due to \citet{lim2022conditional}, Assumption \ref{asmp:distr} implies that, for $e_{i}\left(\beta_{0}\right):=Y_{i}-\beta_{0}X_{i}$, 

\[
\left[\begin{array}{c}
Q_{e\left(\beta_{0}\right)e\left(\beta_{0}\right)} - \Delta^2 \frac{\mu^{2}}{\sqrt{K}} \\
Q_{Xe\left(\beta_{0}\right)} - \Delta \frac{\mu^{2}}{\sqrt{K}}\\
Q_{XX} - \frac{\mu^{2}}{\sqrt{K}}
\end{array}\right]\xrightarrow{d}N\left(\left[\begin{array}{c}
0\\
0\\
0
\end{array}\right],\left[\begin{array}{ccc}
\Phi\left(\beta_{0}\right) & \Sigma_{12}\left(\beta_{0}\right) & \Sigma_{13}\left(\beta_{0}\right)\\
\Sigma_{12}\left(\beta_{0}\right) & \Psi\left(\beta_{0}\right) & \tau\left(\beta_{0}\right)\\
\Sigma_{13}\left(\beta_{0}\right) & \tau\left(\beta_{0}\right) & \Upsilon
\end{array}\right]\right)
\]

where, 
\begin{align*}
\Phi\left(\beta_{0}\right) & =\Delta^{4}\Upsilon+4\Delta^{3}\tau+\Delta^{2}\left(4\Psi+2\Sigma_{13}\right)+4\Delta\Sigma_{12}+\Phi\\
\Sigma_{12}\left(\beta_{0}\right) & =\Delta^{3}\Upsilon+3\Delta^{2}\tau+\Delta\left(2\Psi+\Sigma_{13}\right)+\Sigma_{12}\\
\Sigma_{13}\left(\beta_{0}\right) & =\Delta^{2}\Upsilon+2\Delta\tau+\Sigma_{13}\\
\Psi\left(\beta_{0}\right) & =\Delta^{2}\Upsilon+2\Delta\tau+\Psi\\
\tau\left(\beta_{0}\right) & =\Delta\Upsilon+\tau
\end{align*}

Then, defining $S:=\mu^2/\sqrt{K\Upsilon}$ as the concentration parameter (which is numerically equivalent to how $S$ was written in the introduction), 

\[
Q\left(\beta_{0}\right):=\left[\begin{array}{c}
AR\left(\beta_{0}\right)\\
\xi\left(\beta_{0}\right)\\
\nu
\end{array}\right] := \left[\begin{array}{c}
Q_{e\left(\beta_{0}\right)e\left(\beta_{0}\right)}/\sqrt{\Phi\left(\beta_{0}\right)}\\
Q_{Xe\left(\beta_{0}\right)}/\sqrt{\Psi\left(\beta_{0}\right)}\\
Q_{XX}/\sqrt{\Upsilon}
\end{array}\right]\xrightarrow{d}N\left(\left[\begin{array}{c}
\Delta^{2}S\sqrt{\frac{\Upsilon}{\Phi\left(\beta_{0}\right)}}\\
\Delta S\sqrt{\frac{\Upsilon}{\Psi\left(\beta_{0}\right)}}\\
S
\end{array}\right],V(Q(\beta_0))\right)
\]
where
\begin{align*}
    V(Q(\beta_0)) := \left[\begin{array}{ccc}
1 & \Sigma_{12}\left(\beta_{0}\right)/\sqrt{\Phi\left(\beta_{0}\right)\Psi\left(\beta_{0}\right)} & \Sigma_{13}\left(\beta_{0}\right)/\sqrt{\Phi\left(\beta_{0}\right)\Upsilon}\\
\Sigma_{12}\left(\beta_{0}\right)/\sqrt{\Phi\left(\beta_{0}\right)\Psi\left(\beta_{0}\right)} & 1 & \tau\left(\beta_{0}\right)/\sqrt{\Psi\left(\beta_{0}\right)\Upsilon}\\
\Sigma_{13}\left(\beta_{0}\right)/\sqrt{\Phi\left(\beta_{0}\right)\Upsilon} & \tau\left(\beta_{0}\right)/\sqrt{\Psi\left(\beta_{0}\right)\Upsilon} & 1
\end{array}\right]
\end{align*}

In the MWIV environment, the concentration parameter $S$ does not diverge asymptotically. Conversely, in an environment with ``strong" instruments, $S\rightarrow \infty$. In light of the above asymptotic distribution, and how the variance objects can be consistently estimated, the normalized statistics in $Q(\beta_0)$, i.e., $(AR(\beta_0),\xi(\beta_0),\nu)$, have been used for inference on $\beta$. In particular, \citet{mikusheva2022inference} used the $AR(\beta_{0})$ statistic, while \citet{matsushita2022jackknife} used the $\xi(\beta_{0})$ statistic as their LM procedure. Under the null, both test statistics are normally distributed, so an appropriate critical value from the standard normal distribution can be used. Since $\Delta=0$ under the null, characterizing the alternative distribution is unnecessary for developing a valid test. However, characterizing such a distribution for some $\beta_0$ in general is helpful for power comparisons. 

In this characterization, although $Q_{e(\beta_0)e(\beta_0)}, Q_{Xe(\beta_0)}, Q_{XX}$ can be obtained immediately from the data, objects in the variance, such as $\Upsilon,\Psi(\beta_0)$ and $\Phi(\beta_0)$, cannot be feasibly obtained. Hence, $Q(\beta_0)$ is treated as an asymptotic object that is not feasible. Nonetheless, there are feasible estimators for the variance objects that are consistent even under the alternative. Following \citet{mikusheva2022inference}, let:
\begin{align*}
M & :=I-P\\
\tilde{P}_{ij}^{2} & :=\frac{P_{ij}^{2}}{M_{ii}M_{jj}+M_{ij}^{2}}
\end{align*}
so $M$ is the annihilator matrix, and $\tilde{P}_{ij}$ adjusts the $P_{ij}$ object. Then, the feasible and consistent variance estimators are:
\begin{align*}
\hat{\Upsilon} & :=\frac{1}{K}\sum_{i}\left(\sum_{j\ne i}P_{ij}X_{j}\right)^{2}\frac{X_{i}M_{i}X}{M_{ii}}+\frac{1}{K}\sum_{i}\sum_{j\ne i}\tilde{P}_{ij}^{2}M_{i}XX_{i}M_{j}XX_{j}\\
\hat{\tau}\left(\beta_{0}\right) & :=\frac{1}{2}[\frac{1}{K}\sum_{i}\left(\sum_{j\ne i}P_{ij}X_{j}\right)^{2} \left(\frac{X_{i}M_{i}e\left(\beta_{0}\right)}{M_{ii}} +\frac{e_{i}\left(\beta_{0}\right)M_{i}X}{M_{ii}}\right) \\
&\qquad+\frac{1}{K}\sum_{i}\sum_{j\ne i}\tilde{P}_{ij}^{2}(M_{i}XX_{i}M_{j}Xe_{j}\left(\beta_{0}\right)+M_{i}Xe_{i}\left(\beta_{0}\right)M_{j}XX_{j})]\\
\hat{\Psi}\left(\beta_{0}\right) & :=\frac{1}{K}\sum_{i}\left(\sum_{j\ne i}P_{ij}X_{j}\right)^{2}\frac{e_{i}\left(\beta_{0}\right)M_{i}e\left(\beta_{0}\right)}{M_{ii}}+\frac{1}{K}\sum_{i}\sum_{j\ne i}\tilde{P}_{ij}^{2}M_{i}Xe_{i}\left(\beta_{0}\right)M_{j}Xe_{j}\left(\beta_{0}\right)
\end{align*}

Define the normalized statistics as $\hat{\xi}(\beta_0) := Q_{Xe(\beta_0)}/\sqrt{\Psi(\beta_0)}$, $\hat{\nu} := Q_{XX}/\sqrt{\hat{\Upsilon}}$, and $\hat{\rho}(\beta_0) := \hat{\tau}(\beta_{0})/\sqrt{\hat{\Psi}\left(\beta_{0}\right)\hat{\Upsilon}}$, where $\rho\left(\beta_{0}\right):=\tau\left(\beta_{0}\right)/\sqrt{\Psi\left(\beta_{0}\right)\Upsilon}$. I make a high-level assumption on these variance estimators that they converge to the true variance objects. The primitives of the assumption can be justified by \citet{mikusheva2022inference}, who similarly used analogs of these objects.

\begin{assumption} \label{asmp:var_consistent}
$\hat{\Psi}(\beta_0) \xrightarrow{p} \Psi(\beta_0)$, $\hat{\Upsilon} \xrightarrow{p} \Upsilon$ and $\hat{\tau}(\beta_0) \xrightarrow{p} \tau(\beta_0)$. 
\end{assumption}
A corollary of of the assumption is that $\hat{\rho} (\beta_0) \xrightarrow{p} \rho (\beta_0)$ due to the continuous mapping theorem. 

As a special case of the setup, we can consider the judge design. In particular, in the judge design without covariates, it can be shown that $E\left[\nu\right]\geq0$. In this environment, $Z_{i}$ are judge indicators. Let $k(i)$ denote the judge $k$ that $i$ is matched to, and $\pi$ denote the vector of values that $E\left[X_{i}|Z_{i}\right]$ can take for the respective judges, i.e., $\pi_{k}=E\left[X_{i}|k(i)=k\right]$. 

\[
X_{i}=Z_{i}^{\prime}\pi+v_{i}
\]

Since $P_{ij}=1\left\{ k(i)=k(j)\right\} /N_{k(i)}$, where $N_{k}$ is the number of observations assigned to judge $k$,
\[
Q_{XX}=\frac{1}{\sqrt{K}}\sum_{i=1}^{N}\sum_{j\ne i}\frac{1\left\{ k(i)=k(j)\right\} }{N_{k(i)}}X_{i}X_{j}
\]

For any two individuals $i,j$ matched to the same judge, we have:
\begin{align*}
E\left[X_{i}X_{j}\right] & =E\left[\left(\pi_{k}+v_{i}\right)\left(\pi_{k}+v_{j}\right)\right]=\pi_{k}^{2}\geq0
\end{align*}
Hence, $S=E[\nu]\geq 0$. This observation is helpful in motivating the test procedure.  

\section{Procedure}

\subsection{JIVE t statistic} \label{sec:jive_t}
An existing procedure that is robust to having many instruments (albeit not when they are weak) is the jackknife instrumental variables estimator (JIVE). In particular, using a leave-one-out approach to eliminate bias, \citet{angrist1999jackknife} proposes using 
\begin{equation} \label{eqn:beta_jive}
    \hat{\beta}_{JIVE}=\frac{\sum_{i}\sum_{j\ne i}P_{ij}Y_{i}X_{j}}{\sum_{i}\sum_{j\ne i}P_{ij}X_{i}X_{j}}
\end{equation}

The estimator described in the introduction is numerically equivalent to the expression in \Cref{eqn:beta_jive}. To obtain a variance estimator for $\hat{\beta}_{JIVE}$ that is consistent, \citet{chao2012asymptotic} proposed an analogous jackknife approach, which \citet{mikusheva2022inference} subsequently refined. Using $\hat{e}_i := Y_i - X_i \hat{\beta}_{JIVE}$ to denote the JIVE residual, the variance estimator is given by:
\begin{align*}
\hat{V}= & \frac{\sum_{i}\left(\sum_{j\ne i}P_{ij}X_{j}\right)^{2}\frac{\hat{e}_{i}M_{i}\hat{e}}{M_{ii}}+\sum_{i}\sum_{j\ne i}\tilde{P}_{ij}^{2}M_{i}X\hat{e}_{i}M_{j}X\hat{e}_{j}}{\left(\sum_{i}\sum_{j\ne i}P_{ij}X_{i}X_{j}\right)^{2}}
\end{align*}

Then, the JIVE inference procedure proposed in the literature uses the following t-statistic:
\begin{align*}
\hat{t}_{JIVE}^2 & =\frac{\left(\hat{\beta}_{JIVE}-\beta_{0}\right)^{2}}{\hat{V}}
\end{align*}

As \citet{chao2012asymptotic} have shown, if the concentration parameter diverges asymptotically i.e., $S\rightarrow \infty$, then $\hat{t}_{JIVE}$ will have a standard normal distribution asymptotically, so using $\pm 1.96$ critical values for $\hat{t}_{JIVE}$ will be valid. With many instruments, $S\rightarrow \infty$ is analogous to having a strong instrument. But when we have many \emph{weak} instruments, $S$ does not diverge, so using the $\pm 1.96$ critical values for $\hat{t}_{JIVE}$ will not result in valid inference. 

Nonetheless, $\hat{t}_{JIVE}$ can algebraically be expressed as a function of $\hat{\xi}(\beta_0),\hat{\rho}(\beta_0)$ and $\hat{\nu}$, which are feasible normalized statistics defined in Section 2. Since the variance estimators used in these feasible objects are consistent, these statistics converge to $\xi(\beta_0),\rho(\beta_0)$ and $\nu$ respectively, which have a known joint distribution under the asymptotic environment. 

\begin{lemma} \label{lem:JIVE}
Under Assumption \ref{asmp:var_consistent},
\begin{align*}
\hat{t}_{JIVE}^2 & =\frac{\hat{\xi}\left(\beta_{0}\right)^{2}}{1-2\frac{\hat{\xi}\left(\beta_{0}\right)}{\hat{\nu}}\hat{\rho}\left(\beta_{0}\right)+\frac{\hat{\xi}\left(\beta_{0}\right)^{2}}{\hat{\nu}^{2}}}\\
 & =\frac{\xi\left(\beta_{0}\right)^{2}(1+o_{P}(1))}{1-2\frac{\xi\left(\beta_{0}\right)}{\nu}\rho\left(\beta_{0}\right)+\frac{\xi\left(\beta_{0}\right)^{2}}{\nu^{2}}} =: t^2_{JIVE} (1+o_{P}(1))
\end{align*}
\end{lemma}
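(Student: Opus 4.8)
The plan is to first prove the stated identity as an exact algebraic fact and then upgrade it to the ``$(1+o_{P}(1))$'' form using Assumption \ref{asmp:var_consistent}. Write $\hat{\Delta}:=\hat{\beta}_{JIVE}-\beta_{0}$. Since $P$ is symmetric and $e_{i}(\beta_{0})=Y_{i}-\beta_{0}X_{i}$, the numerator of $\hat{\beta}_{JIVE}-\beta_{0}$ in \Cref{eqn:beta_jive} equals $\sum_{i}\sum_{j\ne i}P_{ij}e_{i}(\beta_{0})X_{j}=\sqrt{K}\,Q_{Xe(\beta_{0})}$ and its denominator equals $\sqrt{K}\,Q_{XX}$, so $\hat{\Delta}=Q_{Xe(\beta_{0})}/Q_{XX}$. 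Correspondingly the JIVE residual obeys $\hat{e}_{i}=e_{i}(\beta_{0})-\hat{\Delta}X_{i}$, and applying $M$ gives $M_{i}\hat{e}=M_{i}e(\beta_{0})-\hat{\Delta}\,M_{i}X$.

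Next I would expand the numerator of $\hat{V}$ as a quadratic in $\hat{\Delta}$. Substituting $\hat{e}_{i}=e_{i}(\beta_{0})-\hat{\Delta}X_{i}$ into the ``own'' term $\sum_{i}(\sum_{j\ne i}P_{ij}X_{j})^{2}\hat{e}_{i}M_{i}\hat{e}/M_{ii}$ and into the ``cross'' term $\sum_{i}\sum_{j\ne i}\tilde{P}_{ij}^{2}M_{i}X\hat{e}_{i}M_{j}X\hat{e}_{j}$, and collecting powers of $\hat{\Delta}$, the coefficient of $\hat{\Delta}^{0}$ is exactly $K\hat{\Psi}(\beta_{0})$, the coefficient of $\hat{\Delta}^{1}$ is $-2K\hat{\tau}(\beta_{0})$ (the factor $2$ arising because $\hat{\tau}(\beta_{0})$ symmetrizes the $X$--$e(\beta_{0})$ products), and the coefficient of $\hat{\Delta}^{2}$ is $K\hat{\Upsilon}$; this is a direct term-by-term comparison with the definitions of $\hat{\Psi}(\beta_{0}),\hat{\tau}(\beta_{0}),\hat{\Upsilon}$, carried out separately for the ``own'' and the ``cross'' pieces. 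Since the denominator of $\hat{V}$ equals $(\sqrt{K}\,Q_{XX})^{2}=KQ_{XX}^{2}$, this gives $\hat{V}=[\hat{\Psi}(\beta_{0})-2\hat{\Delta}\hat{\tau}(\beta_{0})+\hat{\Delta}^{2}\hat{\Upsilon}]/Q_{XX}^{2}$.

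Then $\hat{t}_{JIVE}^{2}=\hat{\Delta}^{2}/\hat{V}=\hat{\Delta}^{2}Q_{XX}^{2}/[\hat{\Psi}(\beta_{0})-2\hat{\Delta}\hat{\tau}(\beta_{0})+\hat{\Delta}^{2}\hat{\Upsilon}]$. Dividing numerator and denominator by $\hat{\Psi}(\beta_{0})$ and substituting $\hat{\Delta}=Q_{Xe(\beta_{0})}/Q_{XX}$ together with $\hat{\xi}(\beta_{0})=Q_{Xe(\beta_{0})}/\sqrt{\hat{\Psi}(\beta_{0})}$, $\hat{\nu}=Q_{XX}/\sqrt{\hat{\Upsilon}}$ and $\hat{\rho}(\beta_{0})=\hat{\tau}(\beta_{0})/\sqrt{\hat{\Psi}(\beta_{0})\hat{\Upsilon}}$, one checks the three matchings $\hat{\Delta}^{2}Q_{XX}^{2}/\hat{\Psi}(\beta_{0})=\hat{\xi}(\beta_{0})^{2}$, $\hat{\Delta}\hat{\tau}(\beta_{0})/\hat{\Psi}(\beta_{0})=\hat{\rho}(\beta_{0})\hat{\xi}(\beta_{0})/\hat{\nu}$, and $\hat{\Delta}^{2}\hat{\Upsilon}/\hat{\Psi}(\beta_{0})=\hat{\xi}(\beta_{0})^{2}/\hat{\nu}^{2}$, which yields the first displayed equality.

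For the second equality, Assumption \ref{asmp:var_consistent} and its stated corollary give $\hat{\Psi}(\beta_{0})/\Psi(\beta_{0})\xrightarrow{p}1$, $\hat{\Upsilon}/\Upsilon\xrightarrow{p}1$ and $\hat{\rho}(\beta_{0})\xrightarrow{p}\rho(\beta_{0})$, hence $\hat{\xi}(\beta_{0})=\xi(\beta_{0})(1+o_{P}(1))$ and $\hat{\nu}=\nu(1+o_{P}(1))$. Because $(\xi(\beta_{0}),\nu)$ is asymptotically Gaussian and hence $O_{P}(1)$, with $Q_{XX}\ne0$ with probability tending to one, and because the denominator can be written as $(\xi(\beta_{0})/\nu-\rho(\beta_{0}))^{2}+(1-\rho(\beta_{0})^{2})$, which is bounded below by the positive constant $1-\rho(\beta_{0})^{2}$ (using $|\rho(\beta_{0})|<1$), Slutsky's theorem and the continuous mapping theorem deliver the claimed $t_{JIVE}^{2}(1+o_{P}(1))$ representation. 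The main obstacle is the bookkeeping in the second step: one must verify that the three coefficients of the quadratic in $\hat{\Delta}$ extracted from the numerator of $\hat{V}$ coincide \emph{exactly} --- including the factor of two and the $1/K$ normalization --- with $K\hat{\Psi}(\beta_{0})$, $-2K\hat{\tau}(\beta_{0})$ and $K\hat{\Upsilon}$ for both the ``own'' and the ``cross'' components.
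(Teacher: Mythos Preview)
Your proposal is correct and follows essentially the same route as the paper: both arguments write $\hat{\beta}_{JIVE}-\beta_{0}=Q_{Xe(\beta_{0})}/Q_{XX}$, substitute $\hat{e}_{i}=e_{i}(\beta_{0})-\hat{\Delta}X_{i}$ into $\hat{V}$, expand as a quadratic in $\hat{\Delta}$ to recover $\hat{\Psi}(\beta_{0}),\hat{\tau}(\beta_{0}),\hat{\Upsilon}$, and then normalize and invoke Assumption~\ref{asmp:var_consistent}. Your treatment is in fact slightly more careful than the paper's in that you explicitly justify the $o_{P}(1)$ step via the completed-square lower bound $1-\rho(\beta_{0})^{2}$ on the denominator.
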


This lemma builds on \citet{mikusheva2022inference}, who had a similar expression in their theorem 5. First, Lemma \ref{lem:JIVE} shows a numerical equivalence between the various objects that can be feasibly calculated, rather than just an asymptotic result. Second, the asymptotic result in Lemma \ref{lem:JIVE} holds not just under the null, but also for any $\beta_0 \ne \beta$, and this result is immediate from Assumption \ref{asmp:var_consistent}.

By inspection, the $t_{JIVE}$ statistic has the same distribution as the two-stage-least-squares $t_{TSLS}$ for just-identified IV setup considered in LMMPY. In particular, $\rho(\beta_0)$ has a similar interpretation in both environments. $\nu$ here takes the role of their $f$ and $\xi(\beta_0)$ here takes the role of their $t_{AR}(\beta_0)$. Finally, $S$ in MWIV takes the place of their $f_0$. Hence, when proposing a \citet{StockYogo05} analog, \citet{mikusheva2022inference} used $\nu^2$ as the analog of the $F$-statistic of the first-stage regression in the just-identified environment. 

Since the asymptotic distribution of $\hat{t}_{JIVE}^2$ is known under the null, it is possible to construct a critical value function for $\hat{t}_{JIVE}^2$ that is robust to MWIV. The VtF-based critical value function $c(.)$ proposed in this paper is only a function of $\nu$ and $\rho(\beta_0)$, which can be consistently estimated from the data. Then, for testing $H_0:\beta=\beta_0$, reject if $\hat{t}^2_{JIVE} \geq c(\nu,\rho(\beta_0))$. For all $S,\rho$ and a size $\alpha$ test, the critical value function will satisfy:
\begin{align*}
    \Pr(t_{JIVE}^2>c(\nu,\rho(\beta_0))) = \alpha
\end{align*}

\subsection{Description of curve construction}

There are (at least) two possible ways to construct the curve. The first way is a two-sided construction, which is identical to the VtF procedure in LMMPY. Since $t_{JIVE}^2$ has the same asymptotic distribution as $t_{TSLS}^2$, any inference procedure that is valid for the just-identified $t_{TSLS}^2$ is also valid in this context. Hence, if we are agnostic about $S=E\left[\nu\right]$, we can immediately use the critical values calculated in LMMPY. 

The second way is a one-sided construction, which I will call the one-sided VtF (VtFo). In the judge environment with many instruments, as seen in \Cref{sec:setting}, an argument from the setting shows that there may be good reason to believe that $S\geq0$. By using this piece of information in the curve construction, we may be able to get a more powerful test. The cost is that whenever we observe $\nu<0$, we cannot reject the null, and must conclude that the data is uninformative. The assumption of $S\geq0$ is also a more natural way to think about instruments in the just-identified environment, because researchers often justify the relevance condition by arguing that the instrument $Z$ affects the endogenous variable $X$ in a particular direction. Construction of a curve assuming $S \geq 0$ in the just-identified environment may hence also be of independent interest. 

Considering how the $AR(\beta_0)$ statistic is asymptotically normal and, for a 5\% test, \citet{mikusheva2022inference} proposed using 1.645 instead of $\pm 1.96$ as the critical value for $AR(\beta_0)$, VtFo is essential for comparing the recently-developed VtF with the existing literature on MWIV. 

The remainder of this subsection outlines the one-sided VtF critical value curve construction, while details are relegated to \Cref{sec:curve_details}. The method of construction is similar to the VtF in LMMPY, though the recursive equations differ. Since the curve is constructed under the null, I drop the $\beta_0$ indices without risk of ambiguity. Let $T:=\nu-\rho\xi$, mimicking the $Q$ in LMMPY. If $\rho\ne0$, then $\xi=\frac{1}{\rho}\left(\nu-T\right)$, which means

\begin{align*}
t^{2} & =\frac{\xi^{2}}{1-2\rho\frac{\xi}{\nu}+\frac{\xi^{2}}{\nu^{2}}}\\
 & =\frac{\frac{\nu^{2}}{\rho^{2}}\left(\nu-T\right)^{2}}{\nu^{2}-2\rho\nu\left(\nu-T\right)\frac{1}{\rho}+\frac{\left(\nu-T\right)^{2}}{\rho^{2}}}\\
 & =\frac{\nu^{2}\left(\nu-T\right)^{2}}{\rho^{2}T^{2}+\left(1-\rho^{2}\right)\left(\nu-T\right)^{2}}
\end{align*}

For notational compactness, I use:
\begin{equation} \label{eqn:t_fn}
    t^2(\nu,T,\rho) := \frac{\nu^{2}\left(\nu-T\right)^{2}}{\rho^{2}T^{2}+\left(1-\rho^{2}\right)\left(\nu-T\right)^{2}}
\end{equation}
As in LMMPY, for a given $T$ and $\rho$, the plot of $t^2(\nu,T,\rho)$ against $\nu$ will be W-shaped. Suppose for now that $\left\{ \nu:t^{2}\leq c(\nu,\rho)\right\} =(-\infty,\bar{\nu}]$ for some $\bar{\nu}$ given $T$ and $\rho$. Since $\nu|T\sim N(T,\rho^{2})$, for a test of correct size $\alpha$, we must have:

\begin{equation} \label{eqn:sys1}
\frac{\bar{\nu}^{2}\left(\bar{\nu}-T\right)^{2}}{\rho^{2}T^{2}+\left(1-\rho^{2}\right)\left(\bar{\nu}-T\right)^{2}}-c\left(\bar{\nu},\rho\right) =0
\end{equation}

\begin{equation} \label{eqn:sys2}
\Phi\left(\frac{\bar{\nu}-T}{\sqrt{\rho^{2}}}\right)  =1-\alpha
\end{equation}
where, with some abuse of notation, $\Phi(.)$ is the normal CDF. By solving for $T$ in \Cref{eqn:sys2}, we can substitute $T$ into \Cref{eqn:sys1} to initialize the curve with a closed form solution. Where $\sqrt{q}:=\Phi^{-1}(1-\alpha)$,
\begin{equation} \label{eqn:closed_c}
c\left(\bar{\nu},\rho\right)=\frac{\bar{\nu}^{2}}{\rho^{2}\left(\frac{\bar{\nu}}{\sqrt{\rho^{2}}\sqrt{q}}-1\right)^{2}+\left(1-\rho^{2}\right)}
\end{equation}

The curve can be constructed from the fixed point $\nu^{*}=\sqrt{\rho^{2}q}$, $c^{*}=\frac{\rho^{2}q}{1-\rho^{2}}$.\footnote{To see that this is the fixed point, \Cref{eqn:sys2} yields $\frac{\nu^{*}-T}{\sqrt{\rho^{2}}}=\Phi^{-1}\left(1-\alpha\right)=\sqrt{q}$ so $T=0$, $\nu^{*}=\sqrt{\rho^{2}q}$. Further, $\frac{\nu^{2}\left(\nu-T\right)^{2}}{\rho^{2}T^{2}+\left(1-\rho^{2}\right)\left(\nu-T\right)^{2}}=\frac{\rho^{2}q}{1-\rho^{2}}$. Hence, by using $\nu^{*}=\sqrt{\rho^{2}q},c^{*}=\frac{\rho^{2}q}{1-\rho^{2}}$ in the mapping, we get back the same point, giving us the fixed point as before. } 

Observe that \Cref{eqn:sys1} is quartic in $\bar{\nu}$ in the numerator, so it has a local maxima. As $T$ gets larger, the middle local maxima gets larger, and eventually crosses the critical value function $c(.)$, so that the initial assumption of $\left\{ \nu:t^{2}\leq c(\nu,\rho)\right\} =(-\infty,\bar{\nu}\left(T,\rho\right)]$ no longer holds. This situation means that there is some $\nu<\bar{\nu}$ such that $\frac{\nu^{2}\left(\nu-T\right)^{2}}{\rho^{2}T^{2}+\left(1-\rho^{2}\right)\left(\nu-T\right)^{2}}>c\left(\bar{\nu},\rho\right)$. Then, the critical value function for such $T$ values must satisfy the following four equations: 

\begin{equation} \label{eqn:3cross_nuH}
\frac{\nu_{H}^{2}\left(\nu_{H}-T\right)^{2}}{\rho^{2}T^{2}+\left(1-\rho^{2}\right)\left(\nu_{H}-T\right)^{2}}-c\left(\nu_{H},\rho\right) =0
\end{equation}
\begin{equation} \label{eqn:3cross_nuM}
\frac{\nu_{M}^{2}\left(\nu_{M}-T\right)^{2}}{\rho^{2}T^{2}+\left(1-\rho^{2}\right)\left(\nu_{M}-T\right)^{2}}-c\left(\nu_{M},\rho\right) =0
\end{equation}
\begin{equation} \label{eqn:3cross_nuL}
\frac{\nu_{L}^{2}\left(\nu_{L}-T\right)^{2}}{\rho^{2}T^{2}+\left(1-\rho^{2}\right)\left(\nu_{L}-T\right)^{2}}-c\left(\nu_{L},\rho\right) =0
\end{equation}
\begin{equation} \label{eqn:3cross_probability}
\Phi\left(\frac{\nu_{H}-T}{\sqrt{\rho^{2}}}\right)-\Phi\left(\frac{\nu_{M}-T}{\sqrt{\rho^{2}}}\right)+\Phi\left(\frac{\nu_{L}-T}{\sqrt{\rho^{2}}}\right) =1-\alpha
\end{equation}

where $\nu_L \leq \nu_M \leq \nu_H$. We do not have a neat closed-form solution as before, but we can construct the curve iteratively, thereby tracing out the rest of the curve. Details are provided in \Cref{sec:curve_details}, which readers may wish to skip on the first reading. The critical value functions are displayed in \Cref{fig:cvf_curves}.

\begin{figure}
    \centering
    \caption{Plot of Critical Value Functions for various $\rho$ values} \label{fig:cvf_curves} 
    \includegraphics[scale=.2]{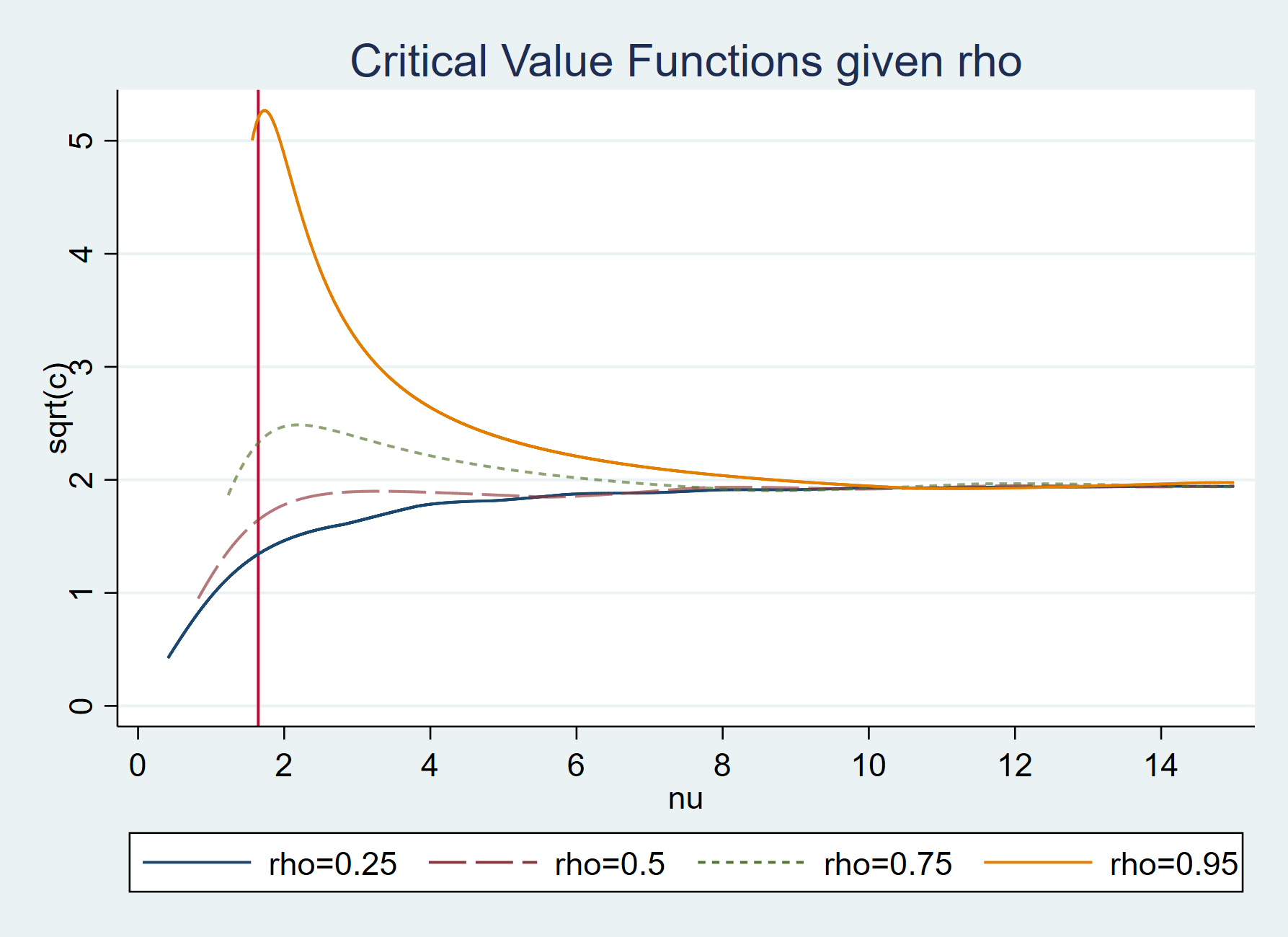} 
    \footnotesize\flushleft The vertical line corresponds to $\nu = 1.645$. The plot displays critical values for a 5\% test. 
\end{figure}

Many features of this VtFo curves are similar to the VtF curves of LMMPY. For any value of $\rho$, as $\nu$ increases, $\sqrt{c}$ converges to the standard normal critical value of 1.96. 
Even though this test is motivated by a one-sided first-stage, we are still conducting a two-sided test using the $t_{JIVE}$ statistic, so 1.96 is the relevant limit.
When there is a low degree of endogeneity (i.e., $\rho$ is small), $\sqrt{c}$ lies below 1.96, so using the conventional $\pm1.96$ factors will be conservative. For any value with $|\rho| < 1$, VtFo may reject the null even when $\nu < 1.645$, because the critical value function is still defined as long as $\nu > \rho \Phi^{-1}(1-\alpha)$.

\begin{remark}
Since any inference procedure that is valid for $t_{TSLS}$ is also valid for $t_{JIVE}$, the Conditional Wald (CW) approach of \citet{moreira2003conditional} is also valid in the many instruments environment. In particular, using a test where we reject if $\hat{t}_{JIVE}^{2}>c_{CW}\left(\hat{\rho}({\beta_{0}}),\hat{T}\right)$ is valid, where $c_{CW}$ is the same critical value function as in \citet{moreira2003conditional}. 
\end{remark}

\subsection{Details of One-Sided Curve Construction} \label{sec:curve_details}
To be precise about how the procedure works, with a given $\rho$, we can use the expression in \Cref{eqn:closed_c} up to a value $\tilde{\nu}$. Then, we have to switch to the three-intersection algorithm. The remainder of this subsection first describes how to find $\tilde{\nu}$, then describes the three-intersection algorithm.

To find $\tilde{\nu}$, first fix $\rho$ and construct a grid of $T>0$. For every $T$, construct a grid of $\nu$ to calculate $t^2(\nu,T,\rho)$ using \Cref{eqn:t_fn} and $c(\nu,\rho)$ using \Cref{eqn:closed_c}. 
For small values of $T$, the curves of $t^2(\nu,T,\rho)$ and $c(\nu,\rho)$ are displayed in Figure \ref{fig:1cross}, where $t^2$ is the W-curve in solid line while $c(\nu,\rho)$ is the dashed line. Since the rejection rule is $t^2(\nu,T,\rho) > c(\nu,\rho)$, for a given $T$ and $\rho$, to find the non-rejection probability, we must integrate over the region of $\nu$ such that the dashed line lies above the solid line. Hence, $\left\{ \nu:t^{2}\leq c(\nu,\rho)\right\} =(-\infty,\bar{\nu}]$, where $\bar{\nu}$ is the $\nu$ value where the two curves intersect. Thus, the non-rejection probability is given by $\Phi((\bar{\nu}-T)/\sqrt{\rho^2})$, motivating \Cref{eqn:sys2}. Since the curves intersect, $t^2(\bar{\nu},T,\rho) = c(\bar{\nu},\rho)$ at this $\bar{\nu}$ value, motivating \Cref{eqn:sys1}. For large values of $T$, as seen in \Cref{fig:3cross}, $c(\nu,\rho)$ and $t^2(\nu,T,\rho)$ cross multiple times. Since we had a grid of $T$, we can find $\tilde{T}$, the first $T$ value such that there are multiple crossings. This $\tilde{T}$ corresponds to the value where the two curves are tangent to each other at some $\nu$, as seen in \Cref{fig:tancross}. With a given $\tilde{T}$, we can use \Cref{eqn:sys1} to solve for the corresponding $\tilde{\nu}$. 

When there are three intersections (as in \Cref{fig:3cross}), we can create a grid of $T$ values starting from $\tilde{T}$ where the three intersection issue first occurs. The system of equations (\ref{eqn:3cross_nuH}) to (\ref{eqn:3cross_probability}) is motivated by the rejection rule of $t^2(\nu,T,\rho) > c(\nu,\rho)$, so we want to integrate over the $\nu$ region where the solid line is below the dashed line. Starting from the smallest $T$ in this grid, we know that $\nu_L \leq \nu_M \leq T \leq \nu_H$. Hence, $c(\nu_M,\rho)$ and $c(\nu_L,\rho)$ are known due to \Cref{eqn:closed_c}. Solve \Cref{eqn:3cross_nuM} and \Cref{eqn:3cross_nuL} to obtain a pair $\nu_L,\nu_M$ corresponding to the given $T$. Then, use $\nu_L,\nu_M$ for the given $T$ in \Cref{eqn:3cross_probability} to obtain a $\nu_H > T$. Use this $\nu_H$ in \Cref{eqn:3cross_nuH} to obtain a $c(\nu_H,\rho)$ value. Repeat this process for subsequent $T$'s in the sequence. The $(\nu_H, c(\nu_H,\rho))$ pair obtained in the first iteration will eventually become a corresponding $(\nu_M, c(\nu_M,\rho))$ pair for some $T$ that is large enough. This iterative procedure allows us to extend the curve indefinitely.

\begin{figure}
    \centering
    \caption{Curves with one crossing} \label{fig:1cross}
    \includegraphics[scale=.2]{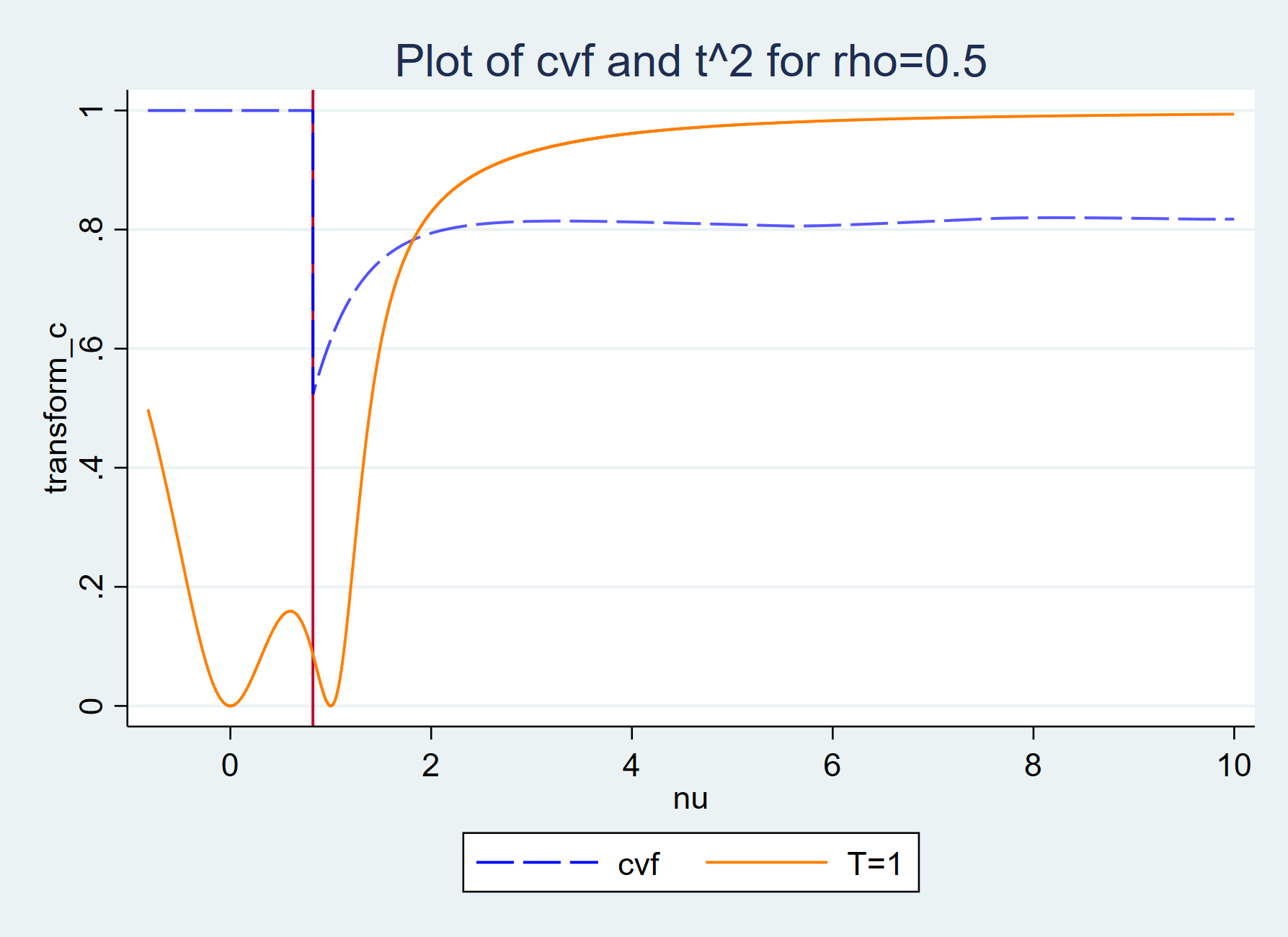} 
    \footnotesize\flushleft The dashed line corresponds to $c(\nu,\rho)$ while the solid line corresponds to $t^2(\nu,T,\rho)$. The $y$ axis has been transformed using $y^* = \frac{y}{1+y}$, so $\infty$ corresponds to 1.
\end{figure}

\begin{figure}
    \centering
    \caption{Curves with three crossings} \label{fig:3cross}
    \includegraphics[scale=.2]{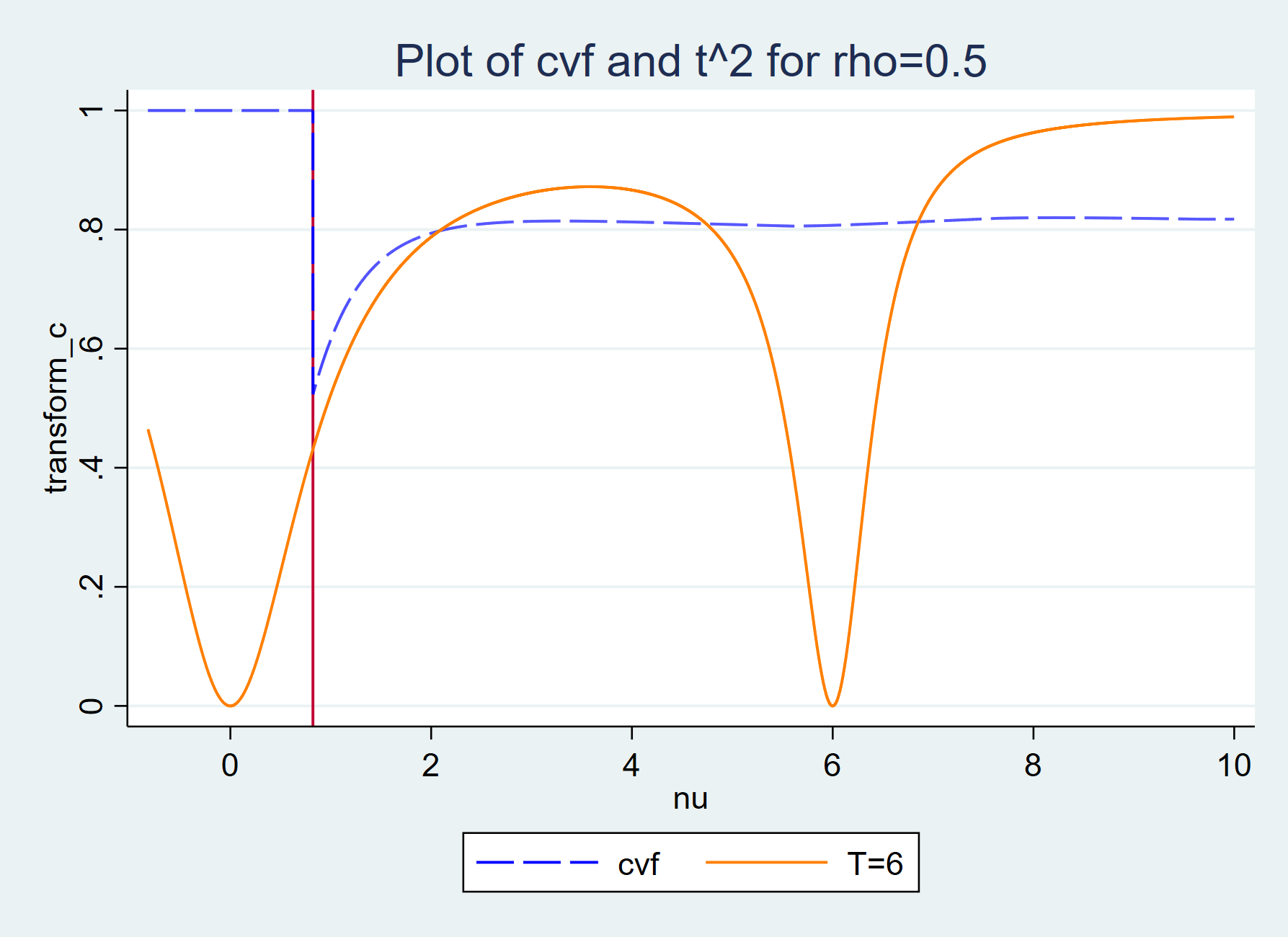} 
    \footnotesize\flushleft The dashed line corresponds to $c(\nu,\rho)$ while the solid line corresponds to $t^2(\nu,T,\rho)$. The $y$ axis has been transformed using $y^* = \frac{y}{1+y}$, so $\infty$ corresponds to 1.
\end{figure}

\begin{figure}
    \centering
    \caption{Curves with a tangent} \label{fig:tancross}
    \includegraphics[scale=.2]{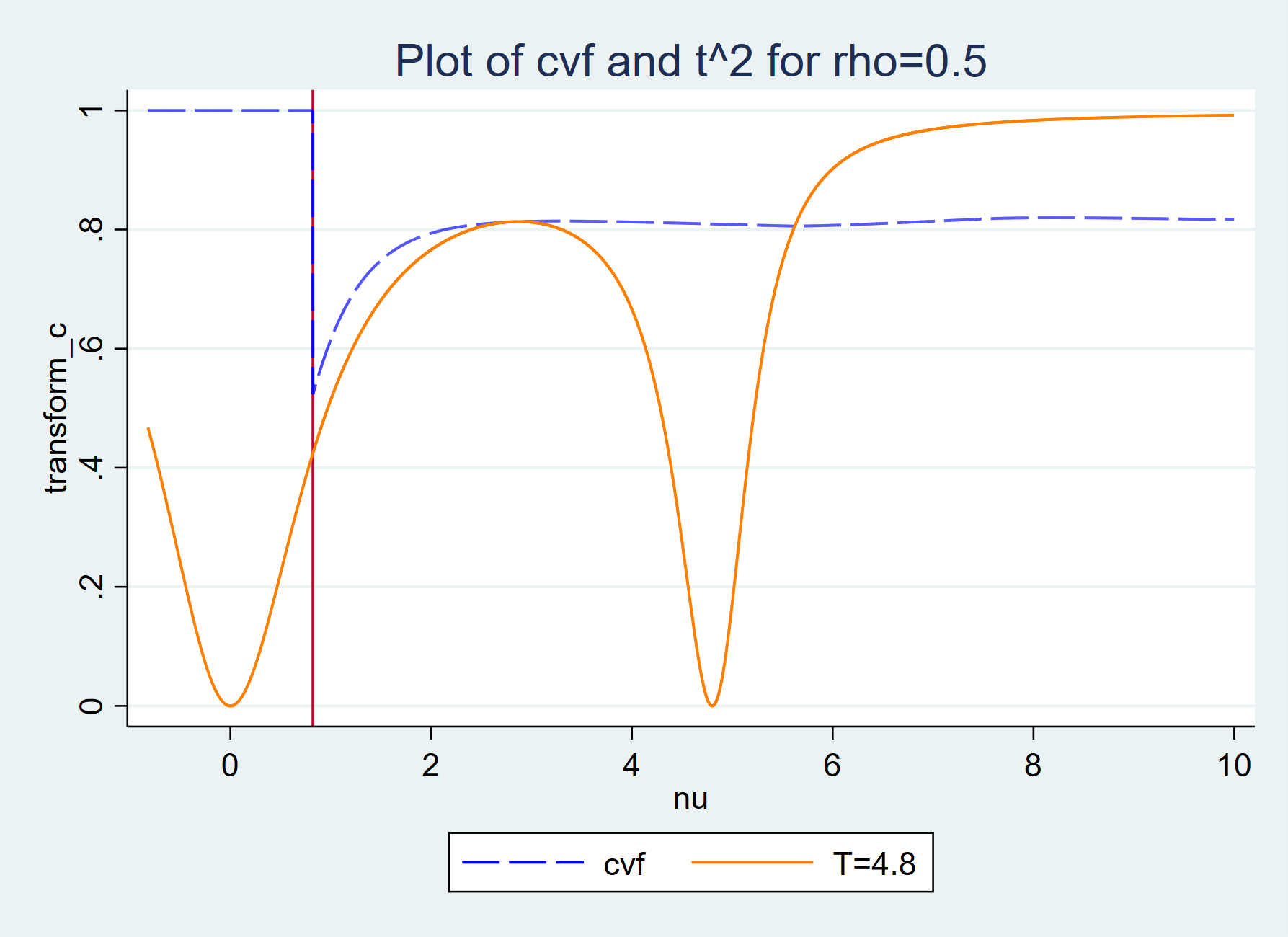} 
    \footnotesize\flushleft The dashed line corresponds to $c(\nu,\rho)$ while the solid line corresponds to $t^2(\nu,T,\rho)$. The $y$ axis has been transformed using $y^* = \frac{y}{1+y}$, so $\infty$ corresponds to 1.
\end{figure}

\section{Properties}
\subsection{Theoretical Power Bounds}
To compare VtF and VtFo in the many instruments environment with existing valid procedures in the literature, I first compare their power theoretically, then compare their numerical performance in the next section. Since the LM statistic of \citet{matsushita2022jackknife} $\xi(\beta_0)$ is entirely analogous to $t_{AR}(\beta_0)$ in the just-identified environment, the relative performance between $\xi(\beta_0)$ relative to VtF here is identical to the relative performance between $t_{AR}(\beta_0)$ and VtF in the just-identified case of LMMPY, so the analysis is omitted for brevity. The $AR(\beta_0)$ statistic of \citet{mikusheva2022inference} has no immediate analog in the just-identified case, so its analysis is warranted.

A useful benchmark of power is the power bound, which is the rejection probability as $\Delta \rightarrow \infty$. Due to the result in \citet{andrews2019optimal}, the power bound is one minus the probability that we get an unbounded confidence set (CS). Hence, to characterize the power bound, it suffices to characterize the probability that the various procedures yield an unbounded CS for a given DGP. The rest of this subsection will show that the VtF procedure and the AR procedure proposed by \citet{mikusheva2022inference} have the same power bound. 

First, consider the two-sided procedures. Based on the just-identified theory of LMMPY, $t^2$ yields an unbounded CS if and only if their $f$ satisfies $f^2\leq 3.84$. Applying their result to our context, using the VtF for $t_{JIVE}$ yields an unbounded CS when $\nu^2 \leq 3.84$. Turning to the \citet{mikusheva2022inference} statistic, we can observe that, when using their $\hat{\Phi}(\beta_0)$,

\begin{align*}
    AR(\beta_0) &= \frac{Q_{e\left(\beta_{0}\right)e\left(\beta_{0}\right)}}{\sqrt{\hat{\Phi}\left(\beta_{0}\right)}} \\
    &= \frac{Q_{YY}-2Q_{XY}\beta_{0}+Q_{XX}\beta_{0}^{2}}{\sqrt{B_{YYYY}-4B_{YYYX}\beta_{0}-2B_{YYXX}\beta_{0}^{2}+4B_{YXYX}\beta_{0}^{2}+4B_{YXXX}\beta_{0}^{3}+B_{XXXX}\beta_{0}^{4}}} 
\end{align*}

where $B_{XYXY}$ is defined as:
\begin{align*}
    B_{XYXY}=\frac{2}{K}\sum_{i}\sum_{j\ne i}\frac{P_{ij}^{2}}{M_{ii}M_{jj}+M_{ij}^{2}}\left[X_{i}M_{i}Y\right]\left[X_{j}M_{j}Y\right]
\end{align*}

Since $AR(\beta_0)$ is normally distributed, if we were to use the rejection rule of $AR(\beta_0)^2>3.84$, then observe that the non-rejection region of $\beta_0$ that satisfies $AR(\beta_0)^2 \leq 3.84$ can be written as a quartic inequality in $\beta_0$ with the leading term $Q_{XX}^2 - 3.84 B_{XXXX}$. Hence, the CS is unbounded when $Q_{XX}^2 \leq 3.84 B_{XXXX}$. Since $B_{XXXX} \xrightarrow{p} \Upsilon$, this condition is equivalent to $\nu^2 \leq 3.84$, which is identical to VtF. 

Next, consider the one-sided procedures. The acceptance rule for VtFo is $t_{JIVE}\leq c\left(\nu;\rho\right)$. When looking at the power bound, we consider the case when $\rho(\beta_0)\rightarrow\pm1$, where the confidence set is the whole real line when $\nu\leq1.645$, because $c(.)\rightarrow\infty$. Hence, the power bound is one minus the probability that $\nu\leq1.645$. 

The (one-sided) \citet{mikusheva2022inference} acceptance rule is $AR(\beta_0) \leq 1.645$. This condition can be expanded to be:
\small
\[
Q_{YY}-2Q_{XY}\beta_{0}+Q_{XX}\beta_{0}^{2}\leq1.645\sqrt{B_{YYYY}-4B_{YYYX}\beta_{0}-2B_{YYXX}\beta_{0}^{2}+4B_{YXYX}\beta_{0}^{2}+4B_{YXXX}\beta_{0}^{3}+B_{XXXX}\beta_{0}^{4}}
\]

\normalsize
To save on notation, the above inequality can be written as $LHS\leq1.645\left(RHS\right)^{1/2}$. Since RHS is positive, we cannot reject the null whenever $LHS\leq0$. A realization of data where $Q_{XX}\leq0$ (which occurs if and only if $\nu\leq0$) is sufficient for the CS to be unbounded. To see this, when $Q_{XX}\leq0$, there exists $\beta_{0}^L$ such that for all $\beta_0: \beta_0 \leq \beta_0^L$, we have $LHS\leq0$. If $LHS>0$, then we can innocuously square both sides of the inequality, and the leading quartic term is $L_4 = 1.645^{2}B_{XXXX}-Q_{XX}^{2}$. If $L_{4}<0$, then we have an unbounded set of $\beta_{0}$ that satisfy the inequality. But in this unbounded set of $\beta_{0}$, we can only accept $\beta_{0}$'s such that $LHS\geq0$. Thus, the  condition for having an unbounded set is $1.645^{2}B_{XXXX}-Q_{XX}^{2}\leq0$, which translates to $\nu\leq1.645$. This condition is identical to VtFo. 

\subsection{Power Curves}

To calculate asymptotic power curves, I draw the asymptotic $Q$'s to calculate power. Since $Q$ are asymptotic statistics, this exercise
abstracts from drawing a dataset $(X,Y,Z)$. I draw the transformed (TR) object directly, and treat $\Upsilon,\Psi,\Phi,\tau$ as known. 

\[
Q_{TR}:=\left[\begin{array}{c}
AR_{TR}\\
\xi_{TR}\\
\nu
\end{array}\right]=\left[\begin{array}{c}
Q_{ee}/\sqrt{\Upsilon\Phi}\\
Q_{Xe}/\sqrt{\Upsilon\Psi}\\
\nu
\end{array}\right]\sim N\left(\left[\begin{array}{c}
0\\
0\\
S
\end{array}\right],\left[\begin{array}{ccc}
1 & r/2 & r^{2}\\
r/2 & \frac{1}{2}\left(1+r^{2}\right) & r/2\\
r^{2} & r/2 & 1
\end{array}\right]\right)
\]

Due to the formulae in the setting for the variance objects under the alternative (i.e., $\tau\left(\beta_{0}\right),\Phi\left(\beta_{0}\right)$, etc.), for any given $\Delta$, these alternative variances can be calculated. For every draw of $Q_{TR}$, and a given $\Delta$, I can calculate the \citet{mikusheva2022inference} test statistic $AR$, the \citet{matsushita2022jackknife} test statistic $\xi$, correlation $\rho\left(\beta_{0}\right)$, and assemble the $t_{JIVE}^{2}$. These objects are sufficient to assess whether each inference procedure rejects the null or not. For every $\Delta,S,r$, I take 10,000 draws, and plot power curves to display the results. Since the distribution is known, we can analytically calculate the power bounds, which are also displayed in the plot. 

\begin{figure}
    \centering
    \caption{Power Curves comparing various procedures} \label{fig:power}
    \includegraphics[scale=1]{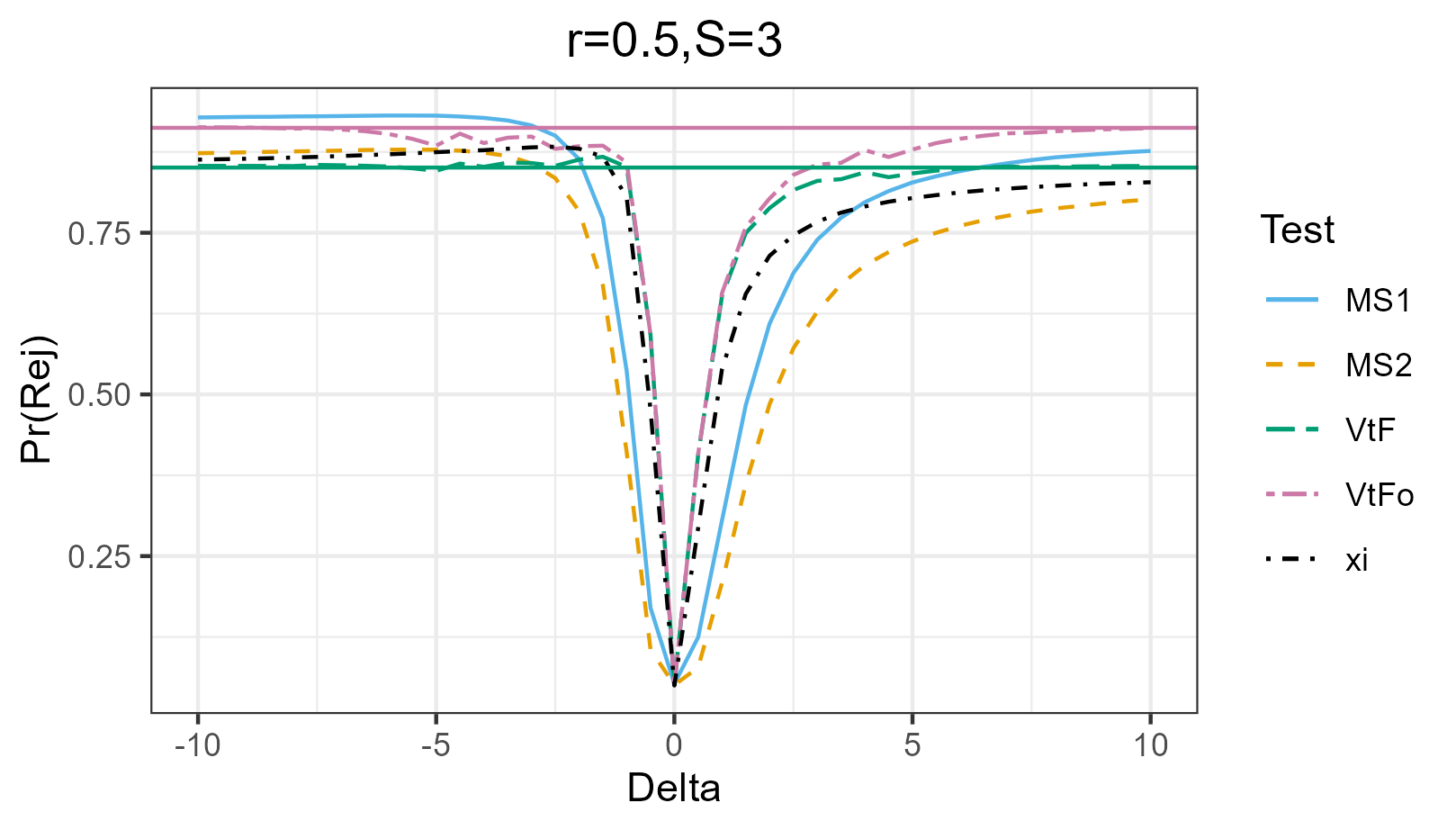} 
    \footnotesize\flushleft $\Delta := \beta - \beta_0$. MS1 and MS2 denote the one-sided and two-sided procedures of \citet{mikusheva2022inference} respectively. xi uses the $\xi(\beta_0)$ statistic of \citet{matsushita2022jackknife}. The horizontal lines denote the power bounds as $\Delta \rightarrow \pm \infty$.
\end{figure}

\Cref{fig:power} displays a power curve. In the data-generating process (DGP) with $r=0.5, S=3$, it is evident that all procedures have the correct size of 0.05 at $\Delta =0$. The performance of xi relative to VtF is identical to that of LMMPY in the two-sided procedures. While MS2 is somewhat less powerful than xi, such a comparison is not fruitful as \citet{mikusheva2022inference} proposed the one-sided ``MS1" procedure instead. Comparing MS1 and VtFo, observe that VtFo is more powerful than MS1 on one side of the alternative than the other. As we extend $\Delta$ out to larger values, both VtFo and MS1 converge to the power bound that was calculated theoretically. Since $S>0$, the power bound of the one-sided procedures is above the power bound of two-sided procedures.

This paper builds on the existing literature on MWIV by using its canonical asymptotic environment. By observing that the distribution of $t_{JIVE}$ is identical to $t_{TSLS}$ in the environment with just-identified instruments, this paper proposes the analog of the VtF procedure from LMMPY adapted to many instruments, and adapts the procedure to accommodate environments where the sign of the expected first-stage effect is known. It is further observed that there is an analogous conditional Wald procedure that is valid. The power properties of the proposed VtF procedure are similar to those in LMMPY, suggesting that the procedure is comparable to those in the existing literature. 


\appendix
\section{Proofs}

\begin{proof} [Proof of Lemma \ref{lem:JIVE}]
First observe that

\begin{align*}
\left(\hat{\beta}_{JIVE}-\beta_{0}\right) & =\frac{\sum_{i}\sum_{j\ne i}P_{ij}Y_{i}X_{j}}{\sum_{i}\sum_{j\ne i}P_{ij}X_{i}X_{j}}-\beta_{0}\\
 & =\frac{\sum_{i}\sum_{j\ne i}P_{ij}\left(X_{i}\beta_{0}+e_{i}\left(\beta_{0}\right)\right)X_{j}}{\sqrt{K}Q_{XX}}-\beta_{0}\\
 & =\frac{\beta_{0}\sqrt{K}Q_{XX}+\sum_{i}\sum_{j\ne i}P_{ij}e_{i}\left(\beta_{0}\right)X_{j}-\beta_{0}\sqrt{K}Q_{XX}}{\sqrt{K}Q_{XX}} =\frac{Q_{Xe\left(\beta_{0}\right)}}{Q_{XX}}
\end{align*}

With

\[
\hat{V}=\frac{\sum_{i}\left(\sum_{j\ne i}P_{ij}X_{j}\right)^{2}\frac{\hat{e}_{i}M_{i}\hat{e}}{M_{ii}}+\sum_{i}\sum_{j\ne i}\tilde{P}_{ij}^{2}M_{i}X\hat{e}_{i}M_{j}X\hat{e}_{j}}{\left(\sum_{i}\sum_{j\ne i}P_{ij}X_{i}X_{j}\right)^{2}}
\]

we know numerically that:

\begin{align*}
\sum_{i}\left(\sum_{j\ne i}P_{ij}X_{j}\right)^{2}\frac{\hat{e}_{i}M_{i}\hat{e}}{M_{ii}} & =\sum_{i}\left(\sum_{j\ne i}P_{ij}X_{j}\right)^{2}\frac{\left(e_{i}\left(\beta_{0}\right)-X_{i}\frac{Q_{Xe\left(\beta_{0}\right)}}{Q_{XX}}\right)M_{i}\left(e\left(\beta_{0}\right)-X\frac{Q_{Xe\left(\beta_{0}\right)}}{Q_{XX}}\right)}{M_{ii}}\\
\sum_{i}\sum_{j\ne i}\tilde{P}_{ij}^{2}M_{i}X\hat{e}_{i}M_{j}X\hat{e}_{j} & =\sum_{i}\sum_{j\ne i}\tilde{P}_{ij}^{2}M_{i}X\left(e_{i}\left(\beta_{0}\right)-X_{i}\frac{Q_{Xe\left(\beta_{0}\right)}}{Q_{XX}}\right)M_{j}X\left(e_{j}\left(\beta_{0}\right)-X_{j}\frac{Q_{Xe\left(\beta_{0}\right)}}{Q_{XX}}\right)
\end{align*}

Then, by expanding $\hat{V}$,

\begin{align*}
\hat{V}= & \frac{1}{KQ_{XX}^{2}}\left\{ \sum_{i}\left(\sum_{j\ne i}P_{ij}X_{j}\right)^{2}\frac{e_{i}\left(\beta_{0}\right)M_{i}e\left(\beta_{0}\right)}{M_{ii}}+\sum_{i}\sum_{j\ne i}\tilde{P}_{ij}^{2}M_{i}Xe_{i}\left(\beta_{0}\right)M_{j}Xe_{j}\left(\beta_{0}\right)\right\} \\
 & -\frac{Q_{Xe\left(\beta_{0}\right)}}{KQ_{XX}^{3}}\left\{ \sum_{i}\left(\sum_{j\ne i}P_{ij}X_{j}\right)^{2}\left(\frac{e_{i}\left(\beta_{0}\right)M_{i}X}{M_{ii}}+\frac{X_{i}M_{i}e_{i}\left(\beta_{0}\right)}{M_{ii}}\right)+2\sum_{i}\sum_{j\ne i}\tilde{P}_{ij}^{2}M_{i}Xe_{i}\left(\beta_{0}\right)M_{j}XX_{j}\right\} \\
 & +\frac{Q_{Xe\left(\beta_{0}\right)}^{2}}{KQ_{XX}^{4}}\left\{ \sum_{i}\left(\sum_{j\ne i}P_{ij}X_{j}\right)^{2}\frac{X_{i}M_{i}X}{M_{ii}}+\sum_{i}\sum_{j\ne i}\tilde{P}_{ij}^{2}M_{i}XX_{i}M_{j}XX_{j}\right\} 
\end{align*}

By substituting these expressions with the hat variance and covariance
objects, we get:

\[
\hat{t}_{JIVE}^2=(\beta_0) = \frac{Q_{Xe\left(\beta_{0}\right)}^{2}}{\hat{\Psi}\left(\beta_{0}\right)-2\frac{Q_{Xe(\beta_{0})}}{Q_{XX}}\hat{\tau}\left(\beta_{0}\right)+\frac{Q_{Xe(\beta_{0})}^{2}}{Q_{XX}^{2}}\hat{\Upsilon}}
\]
Then, with the appropriate normalizations, the numerical equivalence is immediate. 

Under Assumption \ref{asmp:var_consistent},
\begin{align*}
\hat{t}_{JIVE}^2 &= \frac{Q_{Xe\left(\beta_{0}\right)}^{2} (1+o_P(1))}{\Psi\left(\beta_{0}\right)-2\frac{Q_{Xe(\beta_{0})}}{Q_{XX}}\tau\left(\beta_{0}\right)+\frac{Q_{Xe(\beta_{0})}^{2}}{Q_{XX}^{2}}\Upsilon} \\
&= \frac{ \frac{Q_{Xe\left(\beta_{0}\right)}^{2}}{\Psi(\beta_0)} (1+o_P(1))}{1-2\frac{Q_{Xe(\beta_{0})/\sqrt{\Psi(\beta_0)}}}{Q_{XX}/\sqrt{\Upsilon}} \frac{\tau\left(\beta_{0}\right)}{\sqrt{\Psi(\beta_0)/ \sqrt{\Upsilon}}}+\frac{Q_{Xe(\beta_{0})}^{2}/\Psi(\beta_0)}{Q_{XX}^{2}\Upsilon^2}} \\
&=\frac{\xi\left(\beta_{0}\right)^{2}(1+o_{P}(1))}{1-2\frac{\xi\left(\beta_{0}\right)}{\nu}\rho\left(\beta_{0}\right)+\frac{\xi\left(\beta_{0}\right)^{2}}{\nu^{2}}}
\end{align*}
\end{proof}

\bibliographystyle{ecta}
\bibliography{mwiv}

\end{document}